  \newtheorem{theorem}{Theorem}[section]
  \newtheorem{lemma}[theorem]{Lemma}
  \newtheorem{proposition}[theorem]{Proposition}
  \newtheorem{remark}[theorem]{Remark}
  \DeclarePairedDelimiter\ceil{\lceil}{\rceil}
 \DeclarePairedDelimiter\floor{\lfloor}{\rfloor}
  \newcommand{\Mod}[1]{\ (\mathrm{mod}\ #1)}
  \newcommand{\argmax}{\arg\max}
  \newcommand\ar{\mathrm{A}}
  \newcommand\dr{\mathrm{D}}
  \newcommand\vo{\mathcal{V}}
  \newcommand\cd{c(\mathcal{G}^{\mathrm{D}}_k)}
  \newcommand\eam{\mathcal{E}^{\mathrm{A}}_m}
  \newcommand\edm{\mathcal{E}^{\mathrm{D}}_m}
  \newcommand\go{\mathcal{G}}
  \newcommand\ga{\mathcal{G}^{\mathrm{A}}_k}
  \newcommand\gd{\mathcal{G}^{\mathrm{D}}_k}
  \newcommand\eo{\mathcal{E}}
  \newcommand\ea{\mathcal{E}^{\mathrm{A}}_k}
  \newcommand\eob{\overline{\mathcal{E}}}
  \newcommand\eab{\overline{\mathcal{E}}^{\mathrm{A}}_k}
  \newcommand\eabm{\overline{\mathcal{E}}^{\mathrm{A}}_m}
  \newcommand\ed{\mathcal{E}^{\mathrm{D}}_k}
  \newcommand\ua{U^{\mathrm{A}}_l}
  \newcommand\ud{U^{\mathrm{D}}_l}
  \newcommand\ba{\beta^{\mathrm{A}}}
  \newcommand\bab{\overline{\beta}^{\mathrm{A}}}
  \newcommand\bd{\beta^{\mathrm{D}}}
  \newcommand\ra{\rho^{\mathrm{A}}}
  \newcommand\rd{\rho^{\mathrm{D}}}
  \newcommand\ka{\kappa^{\mathrm{A}}}
  \newcommand\kd{\kappa^{\mathrm{D}}}
  \newcommand\ha{h^{\mathrm{A}}}
  \newcommand\hd{h^{\mathrm{D}}}
  \newcommand\ta{T^{\mathrm{A}}}
  \newcommand\td{T^{\mathrm{D}}}
  \newcommand\lb{\alpha}
  \newcommand\lbb{\overline{l}^{\mathrm{D}}}
  \newcommand\la{l^{\mathrm{A}}}
  \newcommand\ld{l^{\mathrm{D}}}
\begin{document}
   \setlength{\abovedisplayskip}{4pt}
   \setlength{\belowdisplayskip}{6pt}
%
\title{\LARGE \bf Cluster Forming of Multiagent Systems \\ in Rolling Horizon Games with Non-uniform Horizons}
\author{Yurid Nugraha, Ahmet Cetinkaya, Tomohisa Hayakawa, Hideaki Ishii, and Quanyan Zhu
\thanks{Yurid Nugraha and Tomohisa Hayakawa are with the Department of  
Systems and Control Engineering, Tokyo Institute of Technology, Tokyo  
152-8552, Japan. {\tt\small{yurid@dsl.sc.e.titech.ac.jp,  
hayakawa@sc.e.titech.ac.jp}}}
\thanks{Ahmet Cetinkaya is with the Shibaura Institute of Technology, Tokyo, 135-8548, Japan. {\tt
\small{ahmet@shibaura-it.ac.jp}}}
\thanks{Hideaki Ishii is with the Department of Computer Science, Tokyo Insitute of Technology, Yokohama 226-8502,  Japan. {\tt
\small{ishii@c.titech.ac.jp}}}
\thanks{Quanyan Zhu is with the Department of Electrical and Computer Engineering, New York University, Brooklyn NY, 11201, USA. {\tt
\small{quanyan.zhu@nyu.edu}}}}
\maketitle

\begin{abstract}
Consensus and cluster forming of multiagent systems in the face of jamming attacks along with reactive recovery actions by a defender are discussed. The attacker is capable to disable some of the edges of the network with the objective to divide the agents into a smaller size of clusters while, in response, the defender recovers some of the edges by increasing the transmission power. We consider repeated games where the resulting optimal strategies for the two players are derived in a rolling horizon fashion. The attacker and the defender possess different computational
abilities to calculate their strategies. This aspect is represented by the non-uniform values of the horizon lengths and the game periods. Theoretical and simulation based results demonstrate the effects of the horizon lengths and the game periods on the agents' states.
\end{abstract}

           \begin{figure*}[t]
\centering
    \begin{minipage}[c]{0.60\textwidth}
    \centering
        \psfrag{ed1}{\scriptsize $\la=1$}
        \psfrag{ed2}{\scriptsize $\la=2$}
        \psfrag{ed3}{\scriptsize $\la=3$}
        \psfrag{ed4}{\scriptsize $\ld=1$}
        \psfrag{ed5}{\scriptsize $\ld=2$}
        \psfrag{ed6}{\scriptsize $\la=4$}
        \psfrag{ed7}{\scriptsize $\la=5$}
        \psfrag{ed8}{\scriptsize $\la=6$}
        \psfrag{ed9}{\scriptsize $\ld=3$}
        \psfrag{ed10}{\scriptsize $\ld=4$}
        \psfrag{ez9}{\scriptsize $\ha=6$}
        \psfrag{ez5}{\scriptsize $\hd=4$}
        \psfrag{ez4}{\scriptsize $\ta=2$}
        \psfrag{ez3}{\scriptsize $\td=3$}
        \psfrag{eda}{\footnotesize $1$st game}
        \psfrag{edb}{\footnotesize $2$nd game}
        \psfrag{edc}{\footnotesize $3$rd game}
        \psfrag{e4}{\small $13$}
        \psfrag{e3}{\small $12$}
        \psfrag{e1}{\small $10$}
        \psfrag{e2}{\small $11$}
        \psfrag{0}{\small $0$}
        \psfrag{1}{\small $1$}
        \psfrag{2}{\small $2$}
        \psfrag{3}{\small $3$}
        \psfrag{4}{\small $4$}
        \psfrag{5}{\small $5$}
        \psfrag{6}{\small $6$}
        \psfrag{7}{\small $7$}
        \psfrag{8}{\small $8$}
        \psfrag{9}{\small $9$}
        \psfrag{eza}{\footnotesize $1$st game}
        \psfrag{ezb}{\footnotesize $2$nd game}
    \includegraphics[trim=0 0 0 0.5cm,width=9 cm]{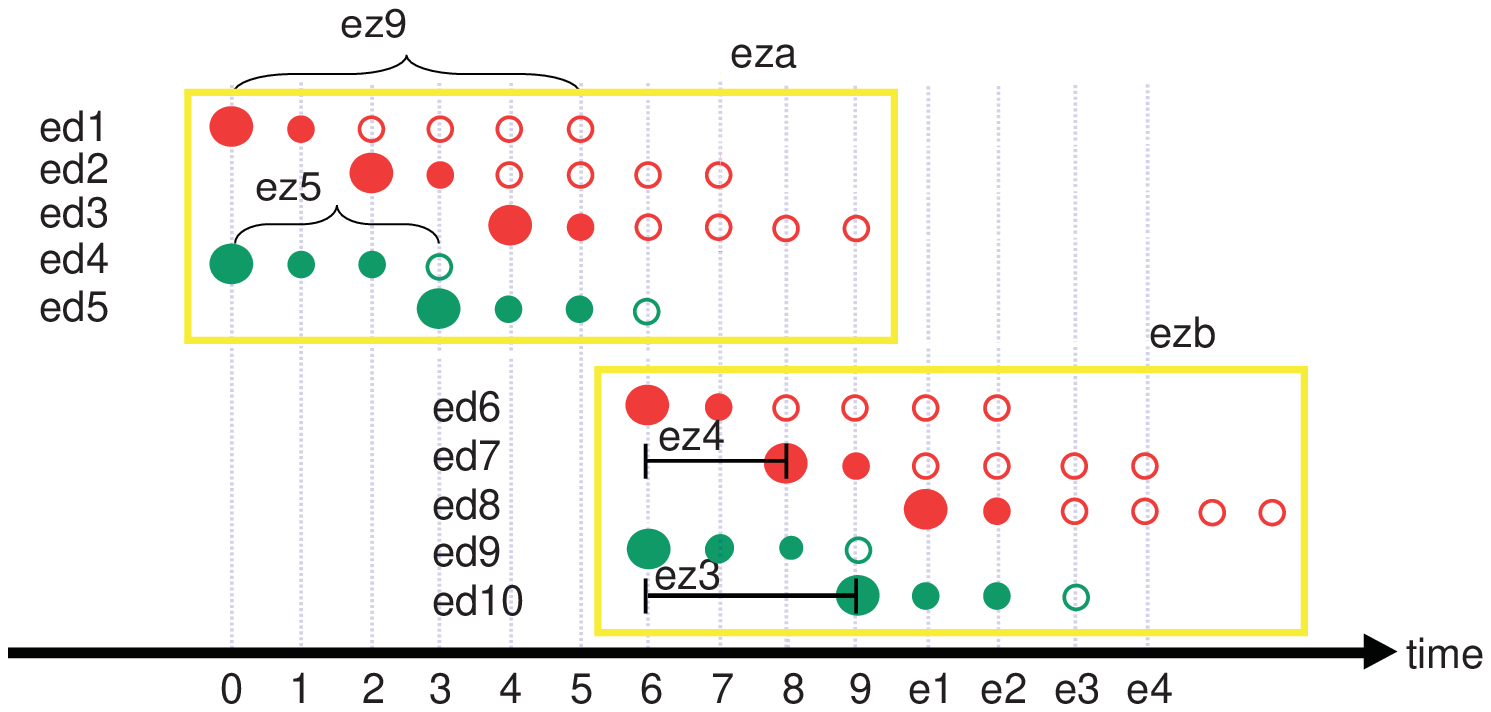}
        \vskip -5pt
    \end{minipage}
    \begin{minipage}[c]{0.25\textwidth}
        \vskip -30pt
        \caption{Sequence of games with decision-making indices $\la$ and $\ld$: attacker's horizon (red) and defender's horizon (green) with non-uniform game periods. The horizon lengths are $\ha=6$ and $\hd=4$, whereas the game periods are $T^{\mathrm{A}}=2$ and $T^{\mathrm{D}}=3$. There are two games denoted by the yellow rectangles; a game is played every $\mathrm{lcm}(\ta,\td)=6$ time instants in this example. The filled circles indicate the implemented strategies and the empty circles indicate the strategies of the game that are not implemented.}
        \label{nonT}
    \end{minipage}
    \vspace{-0.0cm}
\end{figure*} 

\begin{figure}[t]
\centering
    \begin{minipage}[c]{0.20\textwidth}
    \centering
    \includegraphics[trim=0 0 0 50,scale=0.3]{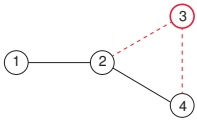}
        \vskip -5pt
    \end{minipage}
    \begin{minipage}[c]{0.25\textwidth}
        \vskip -10pt
        \caption{Example of node attacks discussed in Remark~\ref{rem1}. The attacker attacks node/agent 3, removing edges (3,2) and (3,4). As a result, agent~3 is disconnected from others.}
        \label{nod}
    \end{minipage}
    \vspace{-0.2cm}
\end{figure} 

\section{Introduction}
Multiagent systems are used to model the interaction between a number of agents capable of making local decisions in a network. Due to the distributed nature of the agents, they are prone to cyber attacks initiated by malicious adversaries \cite{sandb}. One of the most common form of cyber attacks is jamming attacks, where adversaries emit interference signals to prevent communication among agents. Jamming attacks on consensus problems of multiagent systems have been studied in, e.g., \cite{tesi,arxivYur}. 

In the presence of adversaries, agents following a standard consensus protocols may not always converge to the same state; instead, they may be divided into several clusters. Cluster forming in multiagent systems has been studied in, e.g., \cite{altafini}, where the weights in the agents' state updates may take negative values, representing possibly hostile relations among certain agents. Game theoretical approaches have been studied to analyze the interaction between such adversaries and the agents in networks \cite{li}.

\textcolor{black}{In an attack-prone multiagent system, model predictive control has been studied to address the situation where agents solve certain optimization problems with constraints by considering the future times characterized by the length of the horizons. A defense mechanism for agents working together under distributed model predictive control is studied in \cite{mpc1}. The rolling horizon concept, which is a key component of model predictive control, is considered in \cite{mzhu2} in a noncooperative security game related to replay attacks.}

In this paper, we consider a jamming attack and defense scenario in a two-player game setting between a centralized attacker and a centralized defender. The attacker attempts to divide the agents into as many clusters as possible, whereas the defender aims to keep the number of clusters small. This game is played repeatedly over time, where the players recalculate and may change their strategies according to a rolling horizon optimization approach. The players are assumed to have different computational abilities, represented by the non-uniform values of the \textit{horizon lengths} and the \textit{game periods}. This problem can be viewed as an extension of our previous studies \cite{cdc21,yurecc21}; we considered the special case with players' uniform horizon parameters in \cite{cdc21}, whereas the performance of the two players with non-uniform horizons was discussed in \cite{yurecc21}.

Games where players update their strategies in an asynchronous manner with different resources such as computation and data can be used to model real-life applications. For example, decisions involving firms from various countries cannot be done simultaneously due to different working times \cite{asyn,asyn2}. Due to the non-uniform horizons, the players' decision making process becomes complicated, executed at different time instants. The player with the longer horizon has a clear advantage; as we study the full information case,  this player may even solve the optimization problem of the opponent that will take place in the future. On the other hand, the player with the shorter horizon can no longer perfectly observe the opponent's planned action; this may result in a waste of the player's resources required to launch the attack/recovery actions. Similar energy allocation games in the context of cyber security have been discussed in, e.g., \cite{en1,en2}.

Here we consider energy allocation games in the context of cluster forming of agents. Our approach can be related to the concept of network effect/externality\cite{net}, where the utility of an agent in a certain cluster depends on how many other agents belong to that particular cluster. Such concepts have been used to analyze grouping of agents on, e.g., social networks and computer networks, as discussed in \cite{sns1,sns2}.

The paper is organized as follows. In Section~II, we introduce the framework for the rolling horizon game and energy consumption models of the players.
In Section~III, we describe in detail the structure of the game with non-uniform horizon lengths and game periods. We continue by discussing the theoretical results on consensus and cluster forming of agents in Sections~IV~and~V. We then provide numerical examples in Section~VI. Finally, we conclude the paper in Section~VII.

The notations used in this paper are fairly standard. We denote $|\cdot|$ as the cardinality of a set. The floor function and the ceiling function are denoted by $\floor{\cdot}$ and $\ceil{\cdot}$, respectively. The set of nonnegative integers is denoted by $\mathbb{N}_0$.

\section{Problem Formulation}
 We explore a multiagent system of $n$ agents communicating to each other in discrete time. The network topology is described by an undirected and connected graph $\go=(\mathcal{V},\mathcal{E})$. It consists of the set $\mathcal{V}$ of vertices representing the agents and the set $\mathcal{E} \subseteq \mathcal{V} \times \mathcal{V}$ of edges representing the communication links. Each agent $i$ has the scalar state $x_i$ following the consensus update rule at time $k \in \mathbb{N}_0$
    \begin{align} 
        x_i[k+1]&=x_i[k]+u_i[k], \label{state} \\
        u_i[k]&=\sum_{j \in {\mathcal{N}_{i}}[k]} a_{ij}(x_j[k]-x_i[k]), \label{state2}
    \end{align}
where $x[0]=x_0$, $a_{ij}>0$, $\sum_{j=1, j \neq i}^n a_{ij} < 1$, and $\mathcal{N}_i[k]$ denotes the set of agents that can communicate with agent $i$ at time $k$. This set may change due to the attacks. 

A two-player game between the attacker and the defender is considered. The attacker is capable to block the communication by jamming some targeted edges and therefore delay (or completely prevent) the consensus among agents. These jamming attacks are represented by the removal of edges in $\mathcal{G}$. In response, the defender tries to recover the inter-agent communications by allocating resources to rebuild some of those edges. 

We consider an attacker that has two types of jamming signals in terms of their strengths, \textit{strong} and \textit{normal}. The defender is able to recover only the edges that are attacked with normal strength; if the defender allocates its energy to the strongly-attacked edges, the edges cannot be rebuilt and the resources will be wasted. Similarly, if the defender allocates its resources to the edges that are not attacked, the resources will also be wasted without any improvement of the network connectivity. While the recent works \cite{li,arxivYur,cdc21} consider jamming in similar multiagent system settings, the notion of wasted resources does not appear there.

\subsection{Attack-recovery sequence}
    In our setting, the players make their attack/recovery actions at every time $k \in \mathbb{N}_0$. At the beginning of time $k$, the communication topology of the system is represented by $\go$. Then, the players decide to attack/recover certain edges in two stages, with the attacker acting first and then the defender. Hence, the game is that of a Stackelberg type.
    
    More specifically, at time $k$, the attacker attacks $\go$ by deleting $\ea\subseteq \eo$ with normal jamming signals and $\eab\subseteq \eo$ with strong jamming signals with $\ea \cap \eab = \emptyset$, whereas the defender recovers $\ed \subseteq \mathcal{E}$. Due to the attacks and then the recoveries, the network changes from $\go$ to $\ga:=(\mathcal{V},\eo \setminus (\ea\cup \overline{\mathcal{E}}^{\mathrm{A}}_k))$ and further to $\gd:=(\mathcal{V},\eo \setminus (\ea\cup \overline{\mathcal{E}}^{\mathrm{A}}_k) \cup (\ed \cap \ea))$. The agents then communicate to their neighbors based on this resulting graph $\gd$. 
    
    In this game, the players attempt to choose the best strategies in terms of edges attacked/recovered $(\overline{\mathcal{E}}^{\mathrm{A}}_k,\ea)$ and $\ed$ to maximize their own utility functions. Here the game is defined over the horizon of several steps. The players make decisions in a rolling horizon fashion as explained more in Section \ref{fiv}; the optimal strategies that have been obtained at a past time may change when the players recalculate their strategies at a future time. Fig.~\ref{nonT} illustrates the discussed sequence over time; the attacker's and the defender's \textit{horizon lengths}, i.e., how far in the future the players look ahead when determining their strategies, are denoted by $\ha$ and $\hd$, respectively, whereas the \textit{game periods}, i.e., how often players update their strategies, are denoted by $\ta$ and $\td$ (discussed in more detail later). As a consequence of having non-uniform game periods, players have separate decision-making processes represented by the decision-making indices $\la$ and $\ld$; a game is defined as a set of decision-making process that starts from a time where the players simultaneously update their strategies, indicated by the yellow boxes.
    
    {\color{black}\begin{remark}\label{rem1}
    In addition to the attacks and the recoveries based on individual edges as introduced above, we can consider a slightly different setting where the attacker can attack nodes/agents so that \textit{all} edges adjacent to the attacked agents are disconnected, as shown in Fig.~\ref{nod}. Specifically, the attacker's actions are now $\ea\in\mathcal{F}$ and $\eab\in\mathcal{F}$, where $\mathcal{F}:=\{\emptyset, F_1,F_2,\ldots,F_n,F_{1}\cup F_{2},\ldots,\eo=\cup_{i\in\vo}F_i\}$ and $F_i:=\{(i,j): \ j\in \mathcal{N}_i\}$ represents the set of edges adjacent to agent~$i$. In this case, the attacker effectively attacks a node/agent by attacking all edges adjacent to it. 
    \end{remark}}
    
    \vspace{-0.1cm}
    \subsection{Energy constraints}
    By attacking and recovering, the players allocate their energies to the attacked/recovered edges. These actions are affected by the constraints on the energy resources, which increase linearly in time. The energy consumed by the players is proportional to the number of planned attacked/recovered edges as well. Here, the strong attacks on $\eab$ take $\bab>0$ energy per edge per unit time compared to the normal attacks on $\ea$, which take $\ba>0$ energy per edge, where $\bab>\ba$. The total energy used by the attacker by time $k$ is constrained as
    \begin{align}
    \sum_{m=0}^{k} & (\bab|\eabm|+\ba|\eam| ) \leq \ka + \ra k, \label{a}
    \end{align} 
     where $\kappa^{\mathrm{A}}\geq \ra>0$, $\ba>0$. This inequality implies that the total energy spent by the attacker cannot exceed the available energy characterized by the initial energy $\kappa^{\mathrm{A}}$ and the supply rate $\rho^{\mathrm{A}}$. This energy constraint upper-bounds the number of edges that the attacker can attack.

    The energy constraint of the defender is similar to (\ref{a}) and is given by
    \begin{equation} \label{en.d}
        \sum_{m=0}^{k} \bd|\edm| \leq \kd + \rd k
    \end{equation}
    with $\kd\geq \rd>0$, $\bd>0$. Note that the defender may allocate its energy inefficiently, i.e., the defender may attempt to recover unattacked edges or edges attacked with strong jamming signals.
    
    {\color{black}{For the node attack case discussed in Remark~\ref{rem1}, the energy constraint of the attacker (\ref{a}) becomes $\sum_{m=0}^{k} (\overline{\beta}^{\ar}_{\vo}|\overline{\vo}^{\ar}_m|+\beta^{\ar}_{\vo}|{\vo}^{\ar}_m|) \leq \ka + \ra k$} with energies $\overline{\beta}^{\ar}_{\vo}>{\beta}^{\ar}_{\vo}$, where $\overline{\vo}^{\ar}_m$ and ${\vo}^{\ar}_m$ denote the sets of nodes/agents whose adjacent edges are attacked with strong and normal jamming signals, respectively. Note that in this case, if an edge is attacked by \textit{both} normal signals and strong signals, then that edge cannot be recovered by the defender.} 
    
    \subsection{Agent clustering and state difference}
    By attacking, the attacker makes the graph disconnected and separates the agents into clusters (i.e., sets of agents). We introduce a few notions related to grouping/clustering of agents. We call each subset $\mathcal{C}\subseteq\mathcal{V}$ of agents taking the same state at infinite time as a \textit{cluster}, i.e., $\lim_{k \to \infty} x_{i}[k]=\lim_{k \to \infty} x_{j}[k], \quad \forall i,j\in \mathcal{C}$.
    
    In the considered game, the attacker and the defender are concerned about the number of agents in each group. Specifically, we follow the notion of \textit{network effect/network externality}\cite{net}, where the utility of an agent in a certain cluster depends on how many other agents belong to that particular cluster. In the context of this game, the attacker attempts to isolate agents so that fewer agents are in each group, while the defender wants as many agents as possible in the same group. We then represent the level of grouping in the graph $\mathcal{G}'$ by the function $c(\cdot)$, which we call the \textit{agent-group index}, given by 
    \begin{align}\label{cluster}
    c(\mathcal{G}'):=\sum_{p=1}^{\overline{n}(\mathcal{G}')} |\vo'_p|^2 -|\vo|^2 \quad (\leq 0).
    \end{align}
    The value of $c(\mathcal{G}')$ is 0 if $\mathcal{G}'$ is connected, since there is only one group (i.e., $\overline{n}(\mathcal{G}')=1$). A larger value (closer to 0) of $c(\mathcal{G}')$ implies that there are fewer groups in graph $\mathcal{G}'$, with each group having more agents.
    
    In our problem setting, the players also consider the effects of their actions on the agent states when attacking/recovering. For example, the attacker may want to separate agents having state values with more difference in different groups. We specify the sum of the agents' state differences $z_k$ of time $k$ as
    \begin{align}
    z_k(\eab,\ea,\ed):= x^{\mathrm{T}}[k+1] L_{\mathrm{c}} x[k+1], \label{z}
    \end{align}
    with $L_{\mathrm{c}}$ being the Laplacian matrix of the complete graph with $n$ agents. The attacked and recovered edges $(\eab,\ea,\ed)$ will affect $x[k+1]$, and in turn influence the value of $z_k$. Note that the value of $z_k$ does not increase over time \cite{FB-LNS} because of the protocol given in (\ref{state}) and (\ref{state2}) even if the system is under attacks.

    The game structure, explained in more detail later, is illustrated in Fig.~\ref{nonT}. The attacker's (resp., the defender's) utility functions of the $\la$th (resp., $\ld$th) decision-making index with $\la,\ld \in \mathbb{N}$ starting at time $k=(\la-1)\ta$ (resp., $k=(\ld-1)\td$) take account of the agent-group index $c(\cdot)$ over time horizons $\ha,\hd\geq1$ from time $(\la-1)\ta$ to $(\la-1)\ta+\ha-1$ (resp., from $(\ld-1)\td$ to $(\ld-1)\td+\hd-1$). Specifically, the utility functions at the $\la$th decision-making process for the attacker and at the $\ld$th decision-making process for the defender are
   \begin{align}
    U^{\ar}_{\la}  &:= \sum_{k=(\la-1)\ta}^{(\la-1)\ta+\ha-1} (a z_k-b \cd), \label{ua} \\
    U^{\dr}_{\ld} &:= \sum_{k=(\ld-1)\td}^{(\ld-1)\td+\hd-1} (-a z_k+ b \cd) \label{ud},
    \end{align}
    which are to be maximized by the players. The player with a longer horizon length and a shorter game period is expected to use its energy more efficiently, and thus to obtain a higher utility over time.

    \vspace{-0.1cm}
    \section{Game Structure with Non-uniform Rolling Horizon Lengths and Game Periods} \label{fiv}
    We are interested in finding the subgame perfect equilibrium of the game outlined so far. To this end, the game is divided into some subgames/decision-making points. The subgame perfect equilibrium must be an equilibrium in every subgame. The optimal strategy of each player is obtained by using a backward induction approach, i.e., by finding the equilibrium from the smallest subgames. The tie-break condition happens when the players' strategies result in the same utility.  In this case, we suppose that the players choose to attack/recover more edges if they have enough energy to attack/recover all edges at all subsequent steps; otherwise, they will attack/recover fewer edges.
    
    In this section, before considering the more general setting, we consider a simpler scenario. This case is when the players employ different horizon parameters but their game periods are the same. Then, we study the case when the game periods of the players also differ. The first case still represents players with different computational abilities to solve games.

    \subsection{Non-uniform Horizon Lengths}
    
        \begin{figure*}
 \begin{minipage}[c]{0.70\textwidth}
        \centering
        \psfrag{A1}{\scriptsize $\overline{\eo}^{\ar}_{l,1},\eo^{\ar}_{l,1}$}
        \psfrag{A2}{\scriptsize $\overline{\eo}^{\ar}_{l,2},\eo^{\ar}_{l,2}$}
        \psfrag{A3}{\scriptsize $\eo^{\dr}_{l,2}$}
        \psfrag{A4}{\scriptsize $\eo^{\dr}_{l,1}$}
        \psfrag{A5}{\scriptsize $\overline{\eo}^{\ar}_{l,3},\eo^{\ar}_{l,3}$}
        \psfrag{A6}{\scriptsize $\eo^{\dr}_{l,3}$}
        \includegraphics[scale=0.55]{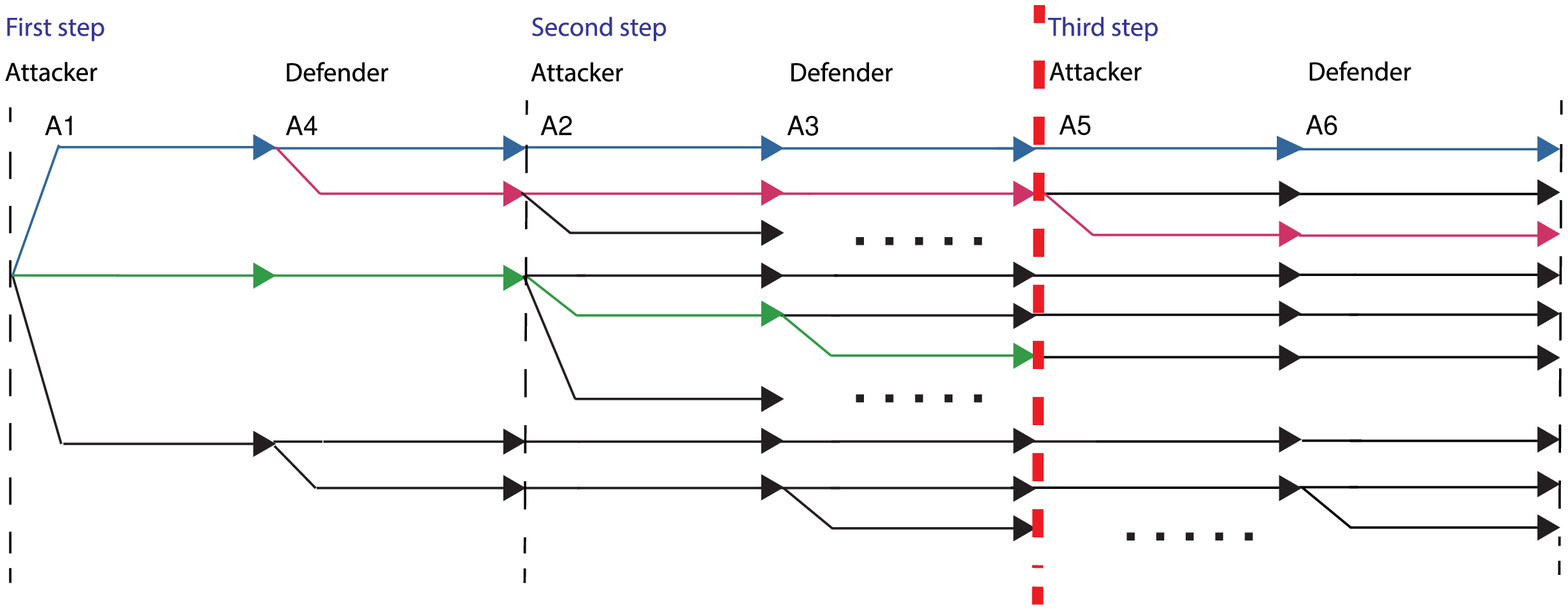}
        \vspace{-0.5cm}
    \end{minipage}
    \hfill
    \begin{minipage}[c]{0.20\textwidth}
        \caption{Extensive-form game for different $\ha$ and $\hd$. The vertical dashed lines denote the different steps of the game, whereas the dashed red line denotes boundary of different player's horizon length. The optimization beyond this limit is done by only the player with longer horizon (in this case the attacker).}
        \label{difhor}
    \end{minipage}
    \vspace{-0.5cm}
    \end{figure*}
    
    In this subsection, we explain the game structure with non-uniform horizon length and uniform game periods. That is, we assume $\ta = \td = T$. This implies that both players make their decisions periodically at the same time. Thus, the indices for the decision-making processes of the players  become equal $\la=\ld=l$ at all times.

    Due to the nature of the rolling horizon approach, the strategies obtained for the $l$th decision-making process, i.e., attacked and recovered edges, are applied only from time $(l-1)T$ to $lT-1$ with $T\leq \min\{\ha,\hd\}$. The players' strategies at the $l$th decision-making process are specified as $((\eob^{\ar}_{l,1},{\eo}^{\ar}_{l,1},\eo^{\dr}_{l,1}),\ldots,(\eob^{\ar}_{l,\hd},{\eo}^{\ar}_{l,\hd},{\eo}^{\dr}_{l,\hd}),(\eob^{\ar}_{l,\hd+1},{\eo}^{\ar}_{l,\hd+1}),$ $\ldots,(\eob^{\ar}_{l,\ha},{\eo}^{\ar}_{l,\ha}))$ if $\ha>\hd$, and $((\eob^{\ar}_{l,1},\eo^{\ar}_{l,1},\eo^{\dr}_{l,1}),$ $\ldots,(\eob^{\ar}_{l,\ha},\eo^{\ar}_{l,\ha},\eo^{\dr}_{l,\ha}),\eo^{\dr}_{l,\ha+1},\ldots,\eo^{\dr}_{l,\hd})$ if $\ha<\hd$, with $\overline{\mathcal{E}}^{\mathrm{A}}_{l,\alpha},\eo^{\mathrm{A}}_{l,\alpha},\eo^{\mathrm{D}}_{l,{\alpha}}$ indicating the strategies at the $\alpha$th step of the $l$th decision-making process with $\alpha \in \mathbb{N}$. Note that if $\ha>\hd$, only the attacker formulates its strategies after $\hd$th step. Similarly, if $\ha<\hd$, only the defender formulates its strategies after $\ha$th step. The case where $\ha=\hd$ can be similarly handled, with the strategies $((\eob^{\ar}_{l,1},{\eo}^{\ar}_{l,1},\eo^{\dr}_{l,1}),\ldots,(\eob^{\ar}_{l,\ha},{\eo}^{\ar}_{l,\ha},{\eo}^{\dr}_{l,\ha}))$.
    
    From these obtained strategies, only the strategies from the 1st step to the $T$th step are applied. Since we consider the full information setting, the values of $\ha$ and $\hd$ are known to both players.

    We now provide an example to explain how the optimal edges are obtained for the case of $\ha=3$ and $\hd=2$. The optimal strategies of the players solved backward in time at the decision-making index $l$ are given by:
\begin{itemize}
     \vspace{-0.2cm}
    \item Step 3:
    \vspace{-0.2cm}
      \begin{align}
      &\eo^{\dr*}_{l,3} (\eob^{\ar}_{l,3},\eo^{\ar}_{l,3})  \in \argmax_{\eo^{\dr}_{l,3}}  -U^{\ar}_{l,3}, \\
    &(\eob^{\ar*}_{l,3}(\eo^{\dr}_{l,2}),\eo^{\ar*}_{l,3}(\eo^{\dr}_{l,2})) \in \argmax_{(\eob^{\ar}_{l,3},\eo^{\ar}_{l,3})} U^{\ar}_{l,3} (\eo^{\dr*}_{l,3}),  \notag \end{align}
    \vspace{-0.7cm}
    \begin{align}
    \textcolor{white}{-}
    \end{align}
    \vspace{-0.2cm}
    \item Step 2:
    \vspace{-0.2cm}
    \begin{align}
    &\eo^{\dr\prime}_{l,2}(\eob^{\ar}_{l,2},\eo^{\ar}_{l,2}) \in \argmax_{\eo^{\dr}_{l,2}}  U^{\dr}_{l,2},\\
    &(\eob^{\ar{\prime}}_{l,2}(\eo^{\dr}_{l,1}),\eo^{\ar{\prime}}_{l,2}(\eo^{\dr}_{l,1})) \in \argmax_{(\eob^{\ar}_{l,2},\eo^{\ar}_{l,2})} -U^{\dr}_{l,2}(\eo^{\dr\prime}_{l,2}), \notag \\[-10pt] \label{in2}
    \end{align}
    \begin{align}
    & (\eob^{\ar*}_{l,2}(\eo^{\dr}_{l,1}),\eo^{\ar*}_{l,2}(\eo^{\dr}_{l,1})) \in  \argmax_{(\eob^{\ar}_{l,2},\eo^{\ar}_{l,2})} U^{\ar}_{l,2}(\eo^{\dr\prime}_{l,2}), \notag \end{align}
    \vspace{-0.9cm}
    \begin{align}
    \textcolor{white}{-} \label{in3}
    \end{align} 
    \vspace{-0.2cm}
    \item Step 1:
    \vspace{-0.2cm}
    \begin{align}
    &\eo^{\dr\prime}_{l,1}(\eob^{\ar}_{l,1},\eo^{\ar}_{l,1}) \in \argmax_{\eo^{\dr}_{l,1}} \ud(\eob^{\ar'}_{l,2},\eo^{\ar\prime}_{l,2}),
    \label{in1} \\
    & (\eob^{\ar*}_{l,1},\eo^{\ar*}_{l,1}) \in \argmax_{(\eob^{\ar}_{l,1},\eo^{\ar}_{l,1})} \ua(\eo^{\dr\prime}_{l,1}), \label{in0}
    \end{align}
\end{itemize}
    where $U^{\ar}_{l,\alpha}:=\sum_{k=(l-1)T+\alpha-1}^{(l-1)T+ \ha-1} = a z_k - b\cd $ (resp., $U^{\dr}_{l,\alpha}:=\sum_{k=(l-1)T+\alpha-1}^{(l-1)T+\hd-1} = - a z_k + b\cd $) is defined as parts of $\ua$ (resp., $\ud$) calculated from the $\alpha$th step to the $\ha$th (resp., $\hd$th) step of the $l$th decision-making process.
    
    These optimization problems are solved backward every game period $T$ from the $(\max\{\ha,\hd\})$th step of the $l$th decision-making process. Note that to find $(\eob^{\ar*}_{l,1},\eo^{\ar*}_{l,1})$, one needs to obtain $(\eo^{\dr*}_{l,1}(\eob^{\ar}_{l,1},\eo^{\ar}_{l,1}))$ beforehand. Likewise, to find $(\eo^{\dr*}_{l,1}(\eob^{\ar}_{l,1},\eo^{\ar}_{l,1}))$, one needs to obtain $(\eob^{\ar*}_{l,2}(\eo^{\dr}_{l,1}),\eo^{\ar*}_{l,2}(\eo^{\dr}_{l,1}))$, and so on. Also, note that while $\eo^{\dr*}_{l,3}$ is not part of the defender's strategy, it is still needed for the attacker to obtain $(\eob^{\ar*}_{l,3},\eo^{\ar*}_{l,3}$). Therefore, outside the defender's ability characterized by its horizon length $\hd$, here we suppose that the attacker utilizes the strategy that emulates the defender's best response with longer horizon, i.e., from part of the utility function $-\ua$. Throughout this paper, we denote $(\eob^{\ar*}_{l,\lb},\eo^{\ar*}_{l,\lb},\eo^{\dr*}_{l,\lb})$ as the optimal strategies according to the player with longer horizon length and $(\eob^{\ar\prime}_{l,\lb},\eo^{\ar\prime}_{l,\lb},\eo^{\dr\prime}_{l,\lb})$ as the optimal strategies according to the player with shorter horizon length.
    
    In the step $\alpha\leq\hd (< \ha)$, the defender assumes that the attacker's optimal edges, e.g., in (\ref{in2}), are based on the defender's utility function, which consists of $\hd$ steps only. The defender's optimal strategies according to the attacker, e.g., in (\ref{in1}), are based on the defender's perception of the attacker's optimal strategies, i.e., $(\eob^{\ar\prime}_{l,2},\eo^{\ar\prime}_{l,2})$, since the defender is not able to foresee the attacker's strategy beyond $\hd$. For the attacker, since it is able to compute the optimal strategy for the defender as well (due to the longer $\ha$), the attacker's strategies in the steps with index $\alpha\leq\hd$, e.g., (\ref{in3}) and (\ref{in0}), are based on $\eo^{\dr \prime}_{l,\alpha}$.
    
    In this Stackelberg game setting, the defender's strategy space depends on the attacker's strategy at the same step, i.e., the defender can only recover edges attacked normally. Hence, it is possible that the defender cannot perfectly apply its strategy. Specifically the defender may not recover some of $\ed$, in which the energy is allocated, when the attacker changes its own strategy. In this case, the defender will apply the strategy only on the edges that can be recovered. However, as explained above, it is natural that the recovery of the edges not attacked still consumes energy. This will be important to the discussion of consensus and clustering as shown later. 
    
    The decision-making process of the players in this example is illustrated in the game tree in Fig.~\ref{difhor}, where the blue line indicates the \textit{equilibrium path}, i.e., the strategy taken by the players following backward induction, for $\ha=\hd=3$. The green line indicates the equilibrium path for $\ha=\hd=2$, and the magenta line indicates the equilibrium path for $\ha=3,\hd=2$. In step 2, the attacker assumes that $\mathcal{E}^{\mathrm{D}*}_{l,2}$ comes from the utility over $h^{\mathrm{A}}=3$. The case where $\ha<\hd$ can be similarly described.

\subsection{Non-uniform Game Periods}
\vspace{-0.1cm}
In this subsection, we extend our discussion to the case of non-uniform game periods $\ta$ and $\td$ for the attacker and the defender, respectively. The corresponding decision-making indices are $\la$ and $\ld$, which respectively consist of $\lb^{\ar}$ and $\lb^{\dr}$ steps. These periods $\ta$ and $\td$ are known by both players as we consider the full information case. To ensure that both players are able to obtain their own strategies at any $k$, we set $\ta\leq \ha$ and $\td \leq \hd$. 

The game with non-uniform game periods is illustrated in Fig.~\ref{nonT}. The yellow rectangle indicates the set of \textit{decision-making processes} in one game, which follows a certain pattern. A game is played, i.e., both players simultaneously update their strategies, every lowest common multiple of $\ta$ and $\td$ denoted as $\mathrm{lcm}(\ta,\td)$; in Fig.~\ref{nonT}, the game is played every 2 time steps. With this formulation, it is expected that the players have better performance with shorter game period.

From Fig.~\ref{nonT}, we see that the players may not decide their strategies at the same time. For example, at time $k=2$, only the attacker updates its strategy, whereas the defender does not due to longer $\td$. Since $\ta$ and $\td$ are known by both players, at $k=2$ the attacker decides its strategy considering the defender's strategy that is obtained before at $k=0$. Furthermore, since $\ha=6$ in Fig.~\ref{nonT}, here the attacker with the ability to compute for three time steps ahead can only foresee three steps forward for the defender's $2$nd decision-making process.

Since the non-uniform game periods make the players decide their strategies at different times, we use different decision-making indices $\la$ and $\ld$ to specify the decision-making processes that occur at times $(\la-1)\ta$ and $(\ld-1)\td$ for the attacker and the defender, respectively, where the players maximize the utility functions (\ref{ua}) and (\ref{ud}). Note that different values of these indices for the players may refer to the same time step; e.g., in Fig.~\ref{nonT}, both $\la=2$, $\lb^{\ar}=1$ and $\ld=1$, $\lb^{\dr}=3$ correspond to $k=2$. 

As the players decide their strategies at different times, the optimization problems are different in each time. For example, the optimal strategy of the attacker at time $k=2$ in the case shown in Fig.~\ref{nonT} is given by (only some steps are shown due to space limitation):
\begin{itemize}
    \vspace{-0.1cm}
    \item Step 6 ($k=7$, Step 1 for defender):
    \vspace{-0.2cm}
    \begin{align}
    &\eo^{\dr*}_{3,2} (\eob^{\ar}_{2,6},\eo^{\ar}_{2,6})  \in \argmax_{\eo^{\dr}_{3,2}}  -U^{\ar}_{2,6}, \label{in14}\\
    &(\eob^{\ar*}_{2,6}(\eo^{\dr}_{2,4}),\eo^{\ar*}_{2,6}(\eo^{\dr}_{2,4})) \in \argmax_{(\eob^{\ar}_{2,6},\eo^{\ar}_{2,6})} U^{\ar}_{2,6} (\eo^{\dr*}_{3,2}), \notag
    \end{align}
    \vspace{-0.7cm}
    \begin{align}
    \textcolor{white}{-}
    \end{align}
    \vspace{-0.4cm}
    \item Step 5 ($k=6$, Step 4 for defender): \vspace{-0.2cm}
    \begin{align}
    &\eo^{\dr\prime}_{2,4}(\eob^{\ar}_{2,5},\eo^{\ar}_{2,5}) \in \argmax_{\eo^{\dr}_{2,4}}  U^{\dr}_{2,4}, \label{in16} \\
    & (\eob^{\ar*}_{2,5}(\eo^{\dr}_{2,3}),\eo^{\ar*}_{2,5}(\eo^{\dr}_{2,3})) \in  \argmax_{(\eob^{\ar}_{2,5},\eo^{\ar}_{2,5})} U^{\ar}_{2,5}(\eo^{\dr\prime}_{2,4}), \notag 
    \end{align}\vspace{-0.7cm}
    \begin{align}
    \label{in23}
    \end{align}\vspace{-1cm}
    \begin{align}
    \vspace{-0.2cm}
    \vdots \notag
    \end{align}
    \item Step 2 ($k=3$, Step 1 for defender): \vspace{-0.1cm}
    \begin{align}
    &\eo^{\dr\prime}_{2,1}(\eob^{\ar}_{2,2},\eo^{\ar}_{2,2}) \in \argmax_{\eo^{\dr}_{2,1}}  U^{\dr}_{2,1} (\eob^{\ar\prime}_{2,3},\eo^{\ar\prime}_{2,3}), \label{in16b} \\
    & (\eob^{\ar*}_{2,2}(\eo^{\dr}_{2}),\eo^{\ar*}_{2,2}(\eo^{\dr}_{2})) \in  \argmax_{(\eob^{\ar}_{2,2},\eo^{\ar}_{2,2})} U^{\ar}_{2,2}(\eo^{\dr\prime}_{2,1}), \notag 
    \end{align}
    \vspace{-0.7cm}
    \begin{align}
    \textcolor{white}{-} \label{in23b}
    \end{align}
    \vspace{-0.4cm}
    \item Step 1 $(k=2)$: \vspace{-0.2cm}
    \begin{align}
    & (\eob^{\ar*}_{2,1},\eo^{\ar*}_{2,1}) \in \argmax_{(\eob^{\ar}_{2,1},\eo^{\ar}_{2,1})} U^{\ar}_2(\eo^{\dr}_2). \label{in20}
    \end{align}
\end{itemize}

    The attacker cannot compute more than $\ha=6$ time steps ahead, and hence in (\ref{in14}) above the attacker will use its own utility function $U^{\ar}_{\la}$ to estimate the defender's optimal edges at $\ld=3$. By $k \Mod \td \ne 0$, the defender does not make a new decision and thus will apply the strategy obtained in the previous time instead, e.g., $\eo^{\dr}_2$ obtained at $k=0$. Therefore, it is then possible for the player with shorter game period (in this case, the attacker) to benefit by changing its strategies; for example, in the case explained above, the attacker may benefit by changing $\eo^{\ar}_2$ to avoid the recovery by the defender in $\eo^{\dr}_2$, which has been set and cannot be changed. In this game, it is assumed that the player with longer horizon length is able to correctly recall the strategy of the opponent that has been determined at the same time as its own strategy, i.e., at $i \mathrm{lcm}(\ta,\td), i\in\mathbb{N}_0$. For example, the attacker in this case knows $\eo^{\dr}_2$ since it is decided at time $k=0$, i.e., the same time as the attacker's decision-making time. The strategies of the shorter horizon player determined in the past but not at the same time are also known by the longer horizon player as long as the entire horizon of the shorter horizon player falls into the same horizon of the longer horizon player.
    
    

    \vspace{-0.1cm}
    \section{Consensus Analysis}\label{thre}
    {\color{black}We now examine the effect of the game structure and players' energy constraints on consensus. We note that our earlier work \cite{yurecc21} dealt with players' performance given non-uniform horizons, but consensus and cluster forming were not discussed there, and utilities had different forms.
    
    We first investigate the defender's optimal strategies. 
    \vspace{0.1cm}
    
    \begin{lemma} \label{lem01}
    There exists an infinite sequence  $\lbb:=\{\lbb_1,\lbb_2,\ldots\}$ of the defender's decision-making indices where $\lbb_{i+1}>\lbb_{i}$ and $\lbb_i\in \mathbb{N}$ such that in the $\lbb_i$th decision-making process, the optimal strategy for the defender in the first step is to recover $\eo^{\dr}_{\lbb_i,1}\neq \emptyset$ as long as $\eo^{\ar}_{\lbb_i,1}\neq \emptyset$.
    \end{lemma}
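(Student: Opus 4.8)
The plan is to combine a monotonicity property of the defender's utility in its recovery set with an energy‑accounting argument that makes the defender periodically ``resource‑rich.'' Concretely, I would prove two facts: (i) at the first step of any decision‑making process, recovering one more normally‑attacked edge never decreases the defender's utility, \emph{provided} this does not force the defender to recover fewer edges at later steps; and (ii) the defender is resource‑rich --- able to afford recovering every edge of $\eo$ at every step of its horizon --- at infinitely many decision‑making indices. Granting (i) and (ii), at any resource‑rich index $\lbb_i$ with $\eo^{\ar}_{\lbb_i,1}\neq\emptyset$, pick $e\in\eo^{\ar}_{\lbb_i,1}$: since the defender can afford all remaining recoveries regardless, adding $e$ to the first‑step recovery incurs no later cost, so by (i) it is weakly optimal; if it is strictly better than $\eo^{\dr}_{\lbb_i,1}=\emptyset$ the defender recovers it, and if it merely ties, the tie‑breaking rule --- which prescribes recovering \emph{more} edges precisely when there is enough energy for all subsequent steps --- forces $\eo^{\dr}_{\lbb_i,1}\neq\emptyset$ as well. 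Listing the resource‑rich indices in increasing order then yields the sequence $\lbb$.

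For (i), recovering $e\in\eo^{\ar}_{\lbb_i,1}$ at the first step (time $k=(\lbb_i-1)\td$) reinstates $e$ in $\gd$ at that step. Re‑adding an edge can only merge clusters, hence $\cd$ weakly increases toward $0$; and the consensus update with $e$ present contracts the state disagreement at least as much as without it, so $z_k$ weakly decreases, and --- using that each update matrix is doubly stochastic, so disagreement never grows --- this propagates to make $z_{k'}$ weakly smaller for every later $k'$ in the horizon. Both effects push $U^{\dr}_{\ld}=\sum(-a z_k+b\cd)$ upward, which is the monotonicity claim.

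For (ii), note that $\kd+\rd k\to\infty$ while the defender consumes at most $\bd|\eo|$ per step. Suppose, for contradiction, that beyond some index the defender's leftover energy at every decision‑making time stays below $\bd|\eo|\hd$. Then the defender is forced to spend at an average rate approaching $\rd$ forever, hence to recover edges at infinitely many steps. But a rational defender never spends energy on edges that are not normally attacked (such recoveries leave $\gd$ unchanged) and, whenever it is short of energy, the tie‑breaking rule makes it recover only edges whose recovery \emph{strictly} raises its utility; since $z_k$ is non‑increasing and bounded below --- so its total decrease is finite --- and $c(\cdot)$ takes finitely many values, such strictly‑beneficial recoveries can occur only finitely often. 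This contradiction gives infinitely many resource‑rich indices.

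The main obstacle is (ii): plain energy bookkeeping does not rule out the defender sustaining an average expenditure of $\rd$ per step indefinitely, so the argument must exploit that any such perpetual expenditure would consist of strictly‑utility‑improving recoveries, which is incompatible with the convergence of $z_k$ and the finite range of $c(\cdot)$. Making ``strictly improving'' quantitative and uniform in time --- and handling the borderline case where recoveries keep the graph connected so that only $z_k$ (not $c$) is affected --- is the delicate part.
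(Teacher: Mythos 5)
Your part (i) is essentially the paper's argument: recovering a normally attacked edge weakly increases $\cd$ and weakly decreases $z_k$, hence weakly increases $\ud$, and the supply rate $\rd>0$ guarantees the defender can afford a nonempty recovery at infinitely many indices. Where you diverge is part (ii), and that is where the proposal has a genuine gap. You correctly sense that a merely \emph{weak} improvement plus the tie-breaking rule (which resolves ties toward \emph{fewer} recovered edges when energy is scarce) would force you to establish ``resource-richness'' at infinitely many indices. But your argument for (ii) fails on two counts. First, it assumes ``a rational defender never spends energy on edges that are not normally attacked,'' which contradicts a central feature of the model: the defender commits to recoveries over its horizon based on a \emph{prediction} of the attacker's moves, and with non-uniform periods the attacker can deviate afterward, so wasted expenditure (counted in the constraint (\ref{en.d})) is explicitly possible and even generic. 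Perpetual spending at rate $\rd$ can therefore be perpetual \emph{waste}, and your contradiction does not close. Second, as you yourself concede, $z_k$ being non-increasing and bounded below only gives $z_k-z_{k+1}\to 0$; it does not preclude infinitely many strictly beneficial recoveries of vanishing benefit, so the ``finitely many strict improvements'' step is not established either.

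The paper sidesteps all of this by restricting attention to the scenario actually needed downstream (in Proposition~\ref{lem3}): the agents are already separated into clusters and the attacker must normally attack the edges $\tilde{\eo}^{\ar}_{\lbb_i}$ that join agents with \emph{different} states in order to keep them separated. Recovering such an edge \emph{strictly} decreases $z$ (the reconnected agents have distinct states and move toward each other) and gives a strictly larger $c$ than not recovering, so the improvement is strict, no tie-break is invoked, and hence no resource-richness is required --- only the trivial observation that with supply rate $\rd>0$ the defender can afford at least one edge at infinitely many decision times. If you want to keep your more general formulation (arbitrary nonempty $\eo^{\ar}_{\lbb_i,1}$, including attacks on edges whose recovery is useless), you must either prove (ii) by a different route or, more simply, adopt the paper's restriction to the clustered scenario where strictness is automatic.
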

    \begin{proof}
    We note that if the defender does not recover from nonzero normal attacks in the first step of the game, the worst scenario is that agent states will eventually converge to different values. Consequently, the attacker needs to keep attacking the edges connecting the agents with different states to keep them separated from other clusters. Suppose that the agents are separated into clusters at the game with index $\lbb_i$. Here, we can verify that it always holds:
    \begin{align}
    z_{\lbb_i,1}(\cdot,\tilde{\eo}^{\ar}_{\lbb_i,1},\emptyset)\geq z_{\lbb_i,1}(\cdot,\tilde{\eo}^{\ar}_{\lbb_i,1},\eo^{\dr}_{\lbb_i,1}), \label{sy}    
    \end{align}
    with $\tilde{\eo}^{\ar}_{\lbb_i}$ being the edges separating agents with different states. In this case, the attacker needs to attack $\tilde{\eo}^{\ar}_{\lbb_i}$ to keep the agents from arriving at consensus. Note that (\ref{sy}) is a more specific form of $ z_k(\emptyset,\eo,\emptyset)\geq z_k(\emptyset,\eo,\eo^{\dr}_k)$, where attacking all edges always gives the maximum value of $z_k$.
    
    From (\ref{sy}), the defender always benefits from recovering nonzero number of edges, since $z_{\lbb_i,2}=z_{\lbb_i,1}$ if the defender does not recover any edge, which gives the least value of utility. Since $c(\go^{\dr}_{\lbb_i,1})$ also gives the lowest value if the defender does not recover, at the first step of the ${\lbb_i}$th game the defender's utility with recovering nonzero edges is always better than the case of not recovering any edge. Since the defender constantly gains $\rd$ amount of energy at each time, this action for the defender is the same for the next games with indices $\lbb_{i+1}$, $\lbb_{i+2}$, and so on.
    \end{proof}
    \vspace{0.1cm}
    
        \vspace{0.1cm}
        
    The following two propositions provide necessary conditions for the agents to be separated into multiple clusters for infinitely long duration without achieving consensus. 
    
    \begin{proposition} \label{lem3.4}
      A necessary condition for consensus not to occur is $\ra/\ba \geq \lambda$, where $\lambda$ is the connectivity of $\mathcal{G}$.
    \end{proposition}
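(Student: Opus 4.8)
I would argue by contraposition: assuming that consensus never occurs, I will deduce $\ra/\ba\ge\lambda$. The plan couples a connectivity observation --- failure of consensus forces a fixed edge cut of $\go$ to be \emph{persistently} absent from the post-recovery graph $\gd$ --- with an energy-rate estimate drawn from the attacker's budget $(\ref{a})$: keeping such a cut removed at every step costs at least $\lambda$ normal-attack units per step, and this must asymptotically be financed by the supply rate $\ra$.

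The first part is where the real work lies. Recall that $z_k$ does not increase (see the discussion after $(\ref{z})$); if consensus fails, the states settle into at least two clusters with distinct limiting values, so $z_k\downarrow z_\infty>0$. I would then establish the quantitative dissipation estimate $z_{k-1}-z_k\ge c\sum_{(i,j)\in\gd}(x_i[k]-x_j[k])^2$ for a constant $c>0$ depending only on the weights $a_{ij}$, obtained by writing $x[k+1]=(I-L_{\gd})x[k]$ and lower-bounding the decrease of the quadratic form $x^{\mathrm T}L_{\mathrm c}x$. Summing over $k$ gives $\sum_{k}\sum_{(i,j)\in\gd}(x_i[k]-x_j[k])^2\le z_0-z_\infty<\infty$, hence the inner sum tends to $0$. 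Letting $d>0$ be the smallest gap between distinct cluster limits and taking $k_0$ so that for $k\ge k_0$ every $x_i[k]$ lies within $d/4$ of its own cluster limit, any edge of $\gd$ joining two distinct clusters would contribute at least $(d/2)^2$ to that vanishing sum --- impossible once $k$ is large. Thus, for all $k\ge k_0$, $\gd$ contains no edge between distinct clusters; fixing one cluster $\mathcal C_1$, the set $\tilde{\eo}:=\{(i,j)\in\eo:\ i\in\mathcal C_1,\ j\notin\mathcal C_1\}$ is absent from $\gd$, and since $\ea\cap\eab=\emptyset$ an edge is absent from $\gd$ only if it lies in $\eab\cup(\ea\setminus\ed)$, so $\tilde{\eo}\subseteq\eab\cup(\ea\setminus\ed)$. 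Because $\tilde{\eo}$ is an edge cut of the connected graph $\go$ and $\lambda$ is the minimum cardinality of such a cut (the connectivity of $\go$), this gives $|\eab|+|\ea|\ge|\tilde{\eo}|\ge\lambda$ for every $k\ge k_0$.

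For the second part, since $\bab>\ba>0$, constraint $(\ref{a})$ yields $\ba\sum_{m=0}^{k}(|\eabm|+|\eam|)\le\sum_{m=0}^{k}(\bab|\eabm|+\ba|\eam|)\le\ka+\ra k$; retaining only the indices $m\ge k_0$ on the left and inserting the bound just derived gives $\ba\lambda(k-k_0+1)\le\ka+\ra k$, so dividing by $k$ and letting $k\to\infty$ produces $\lambda\le\ra/\ba$. The main obstacle is the dissipation step and, through it, the passage from ``$\gd$ disconnected infinitely often'' to ``$\gd$ disconnected for all large $k$'': it is exactly the \emph{persistence} of the cut removal --- not merely the monotonicity of $z_k$ --- that forces the supply rate to cover the full cut size $\lambda$ rather than a fraction of it. A minor additional point is the defender's possibly partial recovery, but since it enters only through $|\ea\setminus\ed|\le|\ea|$, the estimate $|\eab|+|\ea|\ge\lambda$ holds no matter what the defender does.
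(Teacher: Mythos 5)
Your proof is correct, and its second half---charging at least $\ba$ per attacked edge (since $\bab>\ba$) and letting the linear budget $\ka+\ra k$ absorb a per-step cost of $\ba\lambda$---is exactly the computation in the paper's own proof. Where you genuinely depart from, and tighten, the paper is the first half. The paper simply asserts that the attacker ``must attack at least $\lambda$ edges at any time $k$ in order to make $\gd$ disconnected,'' tacitly identifying ``consensus does not occur'' with ``$\gd$ is disconnected at every step''; that identification is the nontrivial content of the proposition, since a priori the attacker might hope to block consensus while letting the cut reappear on a sparse set of times. Your dissipation estimate $z_{k-1}-z_k\ge c\sum_{(i,j)\in\gd}(x_i[k]-x_j[k])^2$ (which does hold, with $c$ obtainable from $1+\lambda_{\min}(I-L_{\gd})>0$ uniformly over the finitely many subgraphs, given $\sum_{j\ne i}a_{ij}<1$) yields summability of the edge-wise disagreements, hence shows that every edge of $\gd$ joining distinct limit clusters must eventually be absent; so the full cut $\tilde{\eo}$, of size at least $\lambda$, really is removed at all large $k$, which is precisely the persistence the energy count requires. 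The one step you invoke without proof is that failure of consensus produces well-defined distinct cluster limits rather than non-convergent trajectories; this is standard for symmetric weights under the stated row-sum condition (the same summability argument forces state convergence), but it deserves a sentence. Net effect: the same energy-rate mechanism as the paper, with the ``disconnected at every large $k$'' step---taken for granted there---actually supplied.
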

    \begin{proof}
    We note that, without any recovery from the defender $(\eo^{\mathrm{D}}_k=\emptyset)$, the attacker must attack at least $\lambda$ number of edges with normal signals at any time $k$ in order to make $\mathcal{G}^{\mathrm{D}}_k$ disconnected. If the attacker attacks $\lambda$ edges with normal jamming signals at all times, the energy constraint (\ref{a}) becomes $(\ba \lambda-\ra) k \leq \ka$. Thus, the condition $\ra/\ba\geq \lambda$ has to be satisfied for all $k$.
    \end{proof}
    
  \begin{proposition} \label{lem3}
    A necessary condition for consensus not to occur is $\ra/\bab \geq \lambda$ if either of the following two conditions is satisfied:
    \begin{enumerate}[(a)]
            \item $b=0$, $\hd\geq\ha$ and $\mathrm{lcm}(\ta,\td)=\ta$; or
            \item $b=0$ and $\td=1$.
    \end{enumerate}
    \end{proposition}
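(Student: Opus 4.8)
Assume that either (a) or (b) holds and, aiming to deduce $\ra/\bab\ge\lambda$, suppose consensus does not occur; then the agents remain partitioned into at least two clusters, so $\gd$ must stay disconnected along some cut of at least $\lambda$ edges for every $k$ beyond some $k_0$. The plan is to show that, under (a) or (b), none of these cut edges can be kept deleted by a \emph{normal} attack alone --- the defender always recovers such an edge --- so that the entire cut must be carried by \emph{strong} attacks, i.e.\ $|\eab|\ge\lambda$ for all $k\ge k_0$; substituting this into the attacker's energy constraint (\ref{a}) and letting $k\to\infty$ then forces $\ra/\bab\ge\lambda$.

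The heart of the argument is the recovery claim, and here I would reuse the monotonicity from the proof of Lemma~\ref{lem01}. Since $b=0$, the defender's objective reduces to maximizing $-a\sum_k z_k$, and by (\ref{sy}) (the more specific form of $z_k(\emptyset,\eo,\emptyset)\ge z_k(\emptyset,\eo,\eo^{\dr}_k)$) recovering a cut edge that separates agents with distinct limit states strictly decreases $z_k$ at that step and at every subsequent step of the horizon, while the $\cd$ term is irrelevant because $b=0$. Hence recovering all recoverable cut edges is the defender's best response, strict up to the tie-break rule. What remains is to verify that (a) and (b) each guarantee that the defender is actually in a position to carry out this best response at \emph{every} time step, not merely along the subsequence produced by Lemma~\ref{lem01}. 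Under (b), $\td=1$ makes the defender re-optimize at every $k$ and, by the Stackelberg ordering, observe the attacker's current $(\eab,\ea)$ before choosing $\ed$, so it recovers every normally attacked cut edge at every $k$. Under (a), $\mathrm{lcm}(\ta,\td)=\ta$ gives $\td\mid\ta$, so the defender's decision instants contain the attacker's; moreover $\td\le\ta\le\ha\le\hd$ implies that at each defender decision time the attacker's planned actions for all later times lie within the defender's horizon, so by full information the defender foresees the attacker's future normal attacks and, being the more agile and longer-sighted player, reserves its recovery precisely for those cut edges. Either way, for $k\ge k_0$ the only deleted cut edges in $\gd$ are the strongly attacked ones, giving $|\eab|\ge\lambda$.

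Finally, dropping the nonnegative term $\ba|\eam|$ in (\ref{a}) and using $|\eab|\ge\lambda$ for $k\ge k_0$ gives $\bab\lambda(k-k_0+1)\le \ka+\ra k$ for all such $k$; dividing by $k$ and letting $k\to\infty$ yields $\bab\lambda\le\ra$, i.e.\ the asserted necessary condition $\ra/\bab\ge\lambda$.

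I expect the middle step to be the real obstacle. In case (a) it amounts to careful rolling-horizon bookkeeping: the defender commits its recovery $\td$ steps at a time while the attacker re-plans only every $\ta\ge\td$ steps with a horizon $\ha$ that may be shorter than $\hd$, so one must track the overlap of horizons and decision instants (using $\td\mid\ta$, $\ta\le\ha\le\hd$, $\td\le\hd$) to be certain the committed recovery plan already anticipates each normal attack on the cut. A related subtlety is that Lemma~\ref{lem01} by itself only yields an infinite \emph{subsequence} of recovery instants, which is too sparse for the energy bound; upgrading this to ``every sufficiently large $k$'' is exactly what conditions (a) and (b) --- and not the weaker hypotheses of Proposition~\ref{lem3.4} --- are there to buy.
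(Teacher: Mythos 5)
Your overall strategy matches the paper's in its essential ingredients --- conditions (a) and (b) are used to guarantee that the defender's recovery is not wasted, so the attacker is forced to sustain the disconnection with strong attacks, and the attacker's energy constraint (\ref{a}) then yields $\ra/\bab\geq\lambda$ --- but you argue the direct implication while the paper argues the contrapositive (if $\ra/\bab<\lambda$ then consensus occurs), and this difference is not cosmetic: it leads you to a claim that is stronger than what can be justified. Specifically, your middle step asserts that for every $k\geq k_0$ the defender recovers \emph{every} normally attacked cut edge, so that $|\eab|\geq\lambda$ at \emph{every} time step. This ignores the defender's own energy constraint (\ref{en.d}): the defender can only afford on the order of $\rd/\bd$ recoveries per unit time on average, and if $\rd/\bd<\lambda$ it simply cannot recover a full cut at every step, regardless of how well conditions (a) or (b) let it anticipate the attacker. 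At the steps where the defender's budget is exhausted, the attacker can keep the graph disconnected with normal attacks alone, so the pointwise bound $|\eab|\geq\lambda$ fails and your final averaging argument no longer closes. You correctly identify that upgrading the subsequence of Lemma~\ref{lem01} to ``every sufficiently large $k$'' is the obstacle, but conditions (a)/(b) do not buy that upgrade --- they only rule out \emph{wasted} recovery energy (the attacker evading a committed recovery by re-planning more often or looking further ahead); they say nothing about the \emph{rate} at which recovery energy accrues.

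The paper sidesteps this entirely by never requiring per-step recovery. Its contrapositive argument shows only that, when $\ra/\bab<\lambda$, the defender's (unwasted, by (a) or (b)) recoveries accumulate so that the union graph $(\vo,\bigcup((\eo\setminus(\eab\cup\ea))\cup\ed))$ over bounded decision-index intervals $[\lbb_i,\lbb_{i+j(i)}]$ is connected infinitely often, and then invokes the jointly-connected-graphs consensus result of \cite{ren}. That weaker, interval-union form of connectivity is compatible with the defender recovering only intermittently as its energy replenishes, which is exactly the flexibility your pointwise claim forfeits. To repair your proof you would either need to add a hypothesis on $\rd/\bd$, or replace the pointwise cut argument with the interval-union argument --- at which point you are essentially reconstructing the paper's contrapositive proof.
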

    \begin{proof}
    We prove by contrapositive; especially, we prove that consensus always happens if $\ra/\bab<\lambda$ under the specified conditions.
    
    We first suppose that the attacker attempts to attack $\lambda$ edges strongly at all times to disconnect the graph $\gd$. From (\ref{a}), the energy constraint of the attacker at time $k$ becomes
    $(\bab \lambda-\ra) k \leq \ka$.
    This inequality is not satisfied for higher $k$ if $\ra/\bab < \lambda$, since the left-hand side becomes positive and $\ka$ is finite. Therefore, the attacker cannot attack $\lambda$ edges strongly at all times if $\ra/\bab < \lambda$, and is forced to disconnect the graph by attacking with normal jamming signals instead. As a consequence, it follows from Lemma~\ref{lem01} that there exists an interval of time where the defender always recovers, i.e., $\eo^{\dr}_{\lbb_i,1}\neq \emptyset$, $i=1,2,\ldots,$ are optimal given that $\eo^{\ar}_{{\lbb_i,1}} \ne \emptyset$. Note that this strategy is always applied since it is for the first step of the game.

     From the utility function in (\ref{ud}), given that $b=0$, we can see that the defender obtains a higher utility if the agents are closer, which means that given a nonzero number of edges to recover (at the first step of the games with index $\lbb_i$ described above), the defender recovers the edges connecting further agents. Specifically, for the sequence of decision-making indices $[\lbb_i,\lbb_{i+1}]$, there is a time step where $U^{\dr}_{\ld}(\eo^{\dr}_{\ld,1}=\eo_1) \geq U^{\dr}_{\ld}(\eo_2)$, where $\eo_1$ and $\eo_2$ denote the sets of edges connecting agents with further states and closer states, respectively. Since by communicating with the consensus protocol (\ref{state}) the agents' states are getting closer, the defender will choose different edges to recover if the states of the agents connected by the recovered edges $\ed$ become close enough.
     
     For Case~(a), with $\hd\geq\ha$ and $\mathrm{lcm}(\ta,\td)=\ta$, it is guaranteed that the defender does not waste any energy by recovering, since by having longer horizon length the defender will accurately predict the attacker's action. The game period $\mathrm{lcm}(\ta,\td)=\ta$ implies that the attacker will never update its decision alone, i.e., the defender also updates when the attacker updates, which prevents the attacker to unilaterally change its strategy to avoid the defender's planned recovery. On the other hand, for Case~(b), the defender with $\td=1$ will be able to perfectly observe the attacker's action, and hence can fully avoid the possibility of wasting energy. 
     
     Consequently, for both Cases (a) and (b), if $\ra/\bab < \lambda$, then there exists $j \in \mathbb{N}$ depending on $i$ such that the union of graphs, i.e., the graph having the union of the edges of each graph $(\vo,\bigcup((\eo\setminus(\eab \cup \ea))\cup\ed))$, over the decision-making index $[\lbb_i,\lbb_{i+j(i)}]$ becomes a connected graph for all~$i$. These intervals $[(\lbb_i-1)\td,(\lbb_{i+j(i)}-1)\td]$, $i=1,2,\ldots,$ occur infinitely many times, since the defender's energy bound keeps increasing over time.

    It is shown in \cite{ren} that with protocol (\ref{state}), the agents achieve consensus in the time-varying graph if the union of the graphs over bounded time intervals is a connected graph. This implies that consensus is achieved if $(\vo,\bigcup((\eo\setminus(\eab \cup \ea))\cup\ed))$ is connected over $[\lbb_i,\lbb_{i+j(i)}]$ for all~$i$.
    \end{proof}
    \vspace{0.1cm}

    The next result provides a condition for consensus to be completely blocked and all agents are separated from each other. It shows that the attacker should be capable to make strong attacks on all the edges for all time. 
    
    \begin{proposition} \label{lem3.5}
   A sufficient condition for all agents not to achieve consensus at infinite time  is $\ra/\bab \geq |\eo|$.
    \end{proposition}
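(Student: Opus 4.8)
The plan is to exhibit one attacker strategy that is energy-feasible exactly under $\ra/\bab\ge|\eo|$ and that is optimal in every subgame irrespective of the horizon lengths and game periods, and then to read the consensus conclusion off the resulting network dynamics.

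First I would check feasibility of the strategy ``at every time strongly attack all edges,'' i.e., $\eabm=\eo$ and $\eam=\emptyset$ for all $m\in\mathbb{N}_0$. The cumulative energy through time $k$ in (\ref{a}) then equals $(k+1)\bab|\eo|$, so the constraint becomes $\bab|\eo|+k\bigl(\bab|\eo|-\ra\bigr)\le\ka$ for all $k$. Under $\ra/\bab\ge|\eo|$ the bracketed term is nonpositive, so the left-hand side is at most $\bab|\eo|\le\ra\le\ka$ and the constraint holds for every $k$.

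Next I would argue that this strategy is optimal for the attacker in each decision-making process. When $\eabm=\eo$ and $\eam=\emptyset$, the post-recovery graph is $\gd=(\vo,\emptyset)$ for \emph{any} defender choice, since strongly-attacked edges cannot be recovered ($\ed\cap\ea=\emptyset$). Consequently, at every step $k$ of any horizon, $\cd=|\vo|-|\vo|^{2}$, the minimum value of (\ref{cluster}) and hence the most favorable term for the attacker (as $c(\cdot)$ enters $U^{\ar}$ with the nonpositive weight $-b$); and, with an empty communication graph, the update (\ref{state})--(\ref{state2}) gives $x[k+1]=x[k]$, so the state stays at $x_0$ and $z_k=x^{\mathrm{T}}[0]L_{\mathrm{c}}x[0]$, the maximum value of $z_k$ (recall from the proof of Lemma~\ref{lem01} that attacking all edges maximizes $z_k$, and that $z_k$ is non-increasing under the protocol). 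Thus every summand of $U^{\ar}_{\la}$ in (\ref{ua}) is simultaneously maximized, and independently of the defender's response, so the all-strong strategy is optimal at every step and therefore, by backward induction, constitutes the attacker's subgame-perfect behavior for all $\ha$ and $\ta$; any deviation would either permit averaging (strictly decreasing later $z_k$) or leave an edge that merges two clusters (strictly increasing $c$), and the defender cannot undo a strong attack.

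Finally, since the equilibrium communication graph is empty at all times, $x[k]=x_0$ for all $k$, hence $\lim_{k\to\infty}x_i[k]=x_i[0]$ for every $i$, so the agents never reach a common value (and, for pairwise distinct initial states, each agent forms its own singleton cluster), which is the claim. I expect the main obstacle to be the optimality argument of the previous paragraph: making precise, without a bookkeeping-heavy induction over the rolling horizons, that keeping the graph empty at all earlier times is exactly what freezes the state and therefore pins every $z_k$ at its global maximum, so that all-strong attacking is the genuine subgame-perfect outcome rather than merely one admissible attacker play.
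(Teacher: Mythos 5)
Your proposal is correct and follows essentially the same route as the paper's proof: verify that $\ra/\bab\geq|\eo|$ makes the all-strong attack on $\eo$ energy-feasible at every time, observe that this simultaneously minimizes $c(\gd)$ and maximizes every $z_k$ regardless of the defender's (necessarily futile) response, and conclude that the permanently empty communication graph freezes the states so consensus never occurs. Your version is somewhat more explicit about the feasibility arithmetic and about why the pointwise dominance of each summand of $U^{\ar}_{\la}$ makes the argument independent of the horizon lengths and game periods, but the underlying idea is identical.
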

    \begin{proof}
    As it holds $z_{\la,\lb^{\ar}}(\mathcal{E},\emptyset,\emptyset)\geq z_{\la,\lb^{\ar}}(\eob^{\ar}_{\la,\lb^{\ar}},$ $\eo^{\ar}_{\la,\lb^{\ar}},\eo^{\dr}_{\la,\lb^{\ar}})$ and $c((\vo,\emptyset))\geq c((\vo,\eo \setminus (\eob^{\ar}_{\la,\lb^{\ar}}\cup \eo^{\ar}_{\la,\lb^{\ar}}) \cup (\eo^{\dr}_{\la,\lb^{\ar}}\cap \eo^{\ar}_{\la,\lb^{\ar}})))$, the function $U^{\mathrm{A}}_{\la}$ has the highest value if the attacker attacks all edges, i.e., $\eob^{\ar}_{\la,\lb^{\ar}} = \mathcal{E}$ or $\eo^{\ar}_{\la,\lb^{\ar}} = \mathcal{E}$. With $\ra/\bab \geq |\mathcal{E}|$, the attacker can attack all edges of $\mathcal{G}$ with strong jamming signals at any time $k$. Thus, the attacker will attack $\eo$ strongly at the $\lb^{\ar}$th step of the $\la$th decision-making process, i.e., $\eob^{\ar*}_{\la,\lb^{\ar}}=\mathcal{E}$, which also prevents the defender from recovering any edge, i.e., $\eo^{\dr}_{\la,\lb^{\ar}}=\emptyset$, for all $\la,\lb^{\ar}$. Thus the attacker will attack $\mathcal{E}$ strongly at all time, separating agents into $n$ clusters.
    \end{proof}
    
    \begin{remark} \label{rem2}
    So far we have obtained necessary conditions and a sufficient condition based on the assumption that the unsuccessful recovery, i.e., $\ed\setminus\ea$, still consumes energy as formulated in (\ref{en.d}). The conditions in the case where the defender does not lose energy from unsuccessful recovery can be obtained in more intuitive forms. In such a case, we assume that the energy consumption of the defender satisfies $\sum_{m=0}^{k} \bd|\edm\cap\eam| \leq \kd + \rd k$. Then, necessary conditions to prevent consensus are $\ra/\bab \geq \lambda$ if $b=0$ and $\ra/\ba \geq \lambda$ otherwise. A sufficient condition can be easily obtained as $\ra/\bab \geq |\eo|$.
    \end{remark}
    
    The terms used for the necessary condition and the sufficient condition in Remark~\ref{rem2}, i.e., $\ra/\bab \geq \lambda$, $\ra/\ba \geq \lambda$, and $\ra/\bab \geq |\eo|$, are the same as those in Propositions~\ref{lem3.4}~and~\ref{lem3.5} above, since the conditions of those results are derived from the attacker's ability rather than the defender's, as discussed in the proofs of the propositions. For the case of $b=0$, there is a difference from Proposition~\ref{lem3} that in a no-waste energy situation, the defender's horizon parameters no longer influence the requirement for obtaining a tighter necessary condition. This implies that the defender becomes weaker with energy constraint $(\ref{en.d})$, since the necessary conditions to prevent consensus become less tight.
    
    {\color{black}\begin{remark}
    For the node attack case characterized in Remark~\ref{rem1}, the necessary conditions $\ra/\bab\geq \lambda$ and $\ra/\ba\geq \lambda$ in Propositions~\ref{lem3.4} and \ref{lem3} to prevent consensus change to $\ra/{\overline{\beta}^{\ar}_{\vo}}\geq 1$ and $\ra/{\beta^{\ar}_{{\vo}}}\geq 1$, respectively, since the attacker only needs to isolate an agent to prevent consensus. 
    \end{remark}}

    \vspace{-0.1cm}
    \section{Clustering Analysis}\label{thrf}
    
    In this section, we derive some results on the number of formed clusters of agents at infinite time. From Proposition~\ref{lem3.5} above, it is clear that if $\ra/\bab\geq|\eo|$, then the attacker can make $n$ clusters by strongly attacking all edges at all time. Thus, for the result below, we consider the case where $\ra/\bab<|\eo|$.
    \begin{proposition}\label{lem6}
        Define a vector $\Theta\in \mathbb{R}^{|\eo|}$ with elements $\Theta_i:=\max_{|\eo^{\ar}|=i} \overline{n}((\vo,{\eo}\setminus{\eo^{\ar}}))$, with $\overline{n}((\vo,{\eo}\setminus{\eo^{\ar}}))$ being the number of groups of graph $(\vo,{\eo}\setminus{\eo^{\ar}})$. Then the number of formed clusters at infinite time is upper bounded by
        \begin{itemize}
            \item $\Theta_{\floor{\ra/\bab}}$ if either
            \begin{enumerate}[(a)]
                \item $b=0$, $\hd\geq\ha$ and $\mathrm{lcm}(\ta,\td)=\ta$; or
                \item $b=0$ and $\td=1$,
            \end{enumerate}
            \item $\Theta_{\min\{|\eo|,\floor{\ra/\ba}\}}$ otherwise.
        \end{itemize}
    \end{proposition}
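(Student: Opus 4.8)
The plan is to bound the number of clusters by the number of connected components of the residual graph that the attacker can keep permanently disconnected, and then to bound the size of the corresponding edge cut through the energy constraint~(\ref{a}). Denote by $m^{*}$ the number of clusters at infinite time, and let $C_{\infty}\subseteq\eo$ be the set of edges that are attacked and not recovered in $\gd$ at all sufficiently large times (equivalently, the edges that do not reappear in $\gd$ with bounded recurrence). The first step is to show $m^{*}\le \overline{n}((\vo,\eo\setminus C_{\infty}))$: inside any connected component of $(\vo,\eo\setminus C_{\infty})$ every edge recurs over bounded intervals, so the union of the graphs $\gd$ over bounded intervals restricted to that component is connected, and the consensus result of \cite{ren} invoked in the proof of Proposition~\ref{lem3} forces a common limit state there. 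Combining this with the definition of $\Theta$ and its monotonicity ($\Theta_{i+1}\ge\Theta_{i}$, since removing one more edge can increase $\overline{n}$ by at most one), we obtain $m^{*}\le\overline{n}((\vo,\eo\setminus C_{\infty}))\le\Theta_{|C_{\infty}|}\le\Theta_{q}$ for any integer $q\ge|C_{\infty}|$. It then remains to upper bound $|C_{\infty}|$ in each case.

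For the generic case, every edge of $C_{\infty}$ is attacked (with normal or strong signals) at all large $k$, hence $|\eam|+|\eabm|\ge|C_{\infty}|$ there; since $\ba\le\bab$, (\ref{a}) gives $\ba|C_{\infty}|k\le\sum_{m=0}^{k}(\bab|\eabm|+\ba|\eam|)\le\ka+\ra k$ for all large $k$, which forces $|C_{\infty}|\le\ra/\ba$, i.e.\ $|C_{\infty}|\le\floor{\ra/\ba}$. Together with the trivial bound $|C_{\infty}|\le|\eo|$ this yields $m^{*}\le\Theta_{\min\{|\eo|,\floor{\ra/\ba}\}}$; this is vacuous when $\floor{\ra/\ba}\ge|\eo|$, since then $\Theta_{|\eo|}=n$.

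For Cases~(a) and~(b) I would reuse the argument in the proof of Proposition~\ref{lem3}: under either hypothesis (with $b=0$) the defender wastes no energy, so by Lemma~\ref{lem01} the defender optimally recovers the normally attacked edges that separate clusters, and the union of graphs including these recovered edges becomes connected over bounded intervals. Hence no normally attacked edge can survive in $C_{\infty}$, so $C_{\infty}$ consists solely of edges attacked with strong signals at all large $k$; then $\bab|C_{\infty}|k\le\ka+\ra k$ for all large $k$, giving $|C_{\infty}|\le\floor{\ra/\bab}$ and therefore $m^{*}\le\Theta_{\floor{\ra/\bab}}$.

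The main obstacle is the first step, namely justifying rigorously that $C_{\infty}$ really behaves as a cut realizing the cluster partition: one must exclude the possibility that an inter-cluster edge of $\go$ recurs in $\gd$ with bounded gaps, which by \cite{ren} would pull the incident clusters together, contradicting that they are distinct clusters. Once this structural fact is in place, the remaining steps are routine counting with~(\ref{a}) and the monotonicity of $\Theta$.
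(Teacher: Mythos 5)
Your proposal is correct and follows essentially the same route as the paper: the paper's (much terser) proof likewise argues that clusters are determined by the set of edges that must be attacked persistently, bounds the size of that set by $\floor{\ra/\ba}$ via the energy constraint (\ref{a}) in the generic case, and invokes the argument of Proposition~\ref{lem3} to conclude that under conditions (a) or (b) only strongly attacked edges can persist, giving the $\floor{\ra/\bab}$ bound. Your formalization via $C_{\infty}$, the union-of-graphs consensus result of \cite{ren}, and the monotonicity of $\Theta$ is a more careful rendering of the same idea, and the structural gap you flag at the end is present in the paper's own proof as well.
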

    \vspace{5pt}
    \begin{proof}
    The $i$-th element of the vector $\Theta$ consists of the maximum number of formed groups $\overline{n}((\vo,{\eo}\setminus{\eo^{\ar}}))$ given the number of attacked edges to be $i$. As some edges need to be attacked consistently to divide the agents into clusters, the number of formed clusters at infinite time is never more than the maximum number of groups at any time $k$ given the same number of strongly attacked edges.
    
    The rest of the proof follows from Proposition~\ref{lem3}. For the cases specified in $(a)$ and $(b)$ there, $\floor{\ra/\bab}$ is the maximum achievable number of edges that can be strongly attacked at all times. If the conditions of Cases~(a) or (b) do not hold, then the maximum number of edges is $\floor{\ra/\ba}$. Thus, given the known graph topology $\go$ with $|\eo|$ number of edges, we can imply that depending on the values of horizon lengths and the game periods $\ha$, $\hd$, $\ta$, and $\td$, the values in $\Theta_{\floor{\ra/\bab}}$ or in $\Theta_{\min\{|\eo|,\floor{\ra/\ba}\}}$ give the maximum number of clusters at infinite time.
    \end{proof}

    We illustrate the results in Proposition~\ref{lem6} by considering the unattacked version of the graph in Fig.~\ref{nod} as $\go$. In this graph, the vector $\Theta$ is $\Theta=[2,2,3,4]^{\mathrm{T}}$. Suppose $\ba=1$, $\bab=2$, $\ra=3.5$. In this case, if $b=0$ and $\ha,\ \hd, \ \ta, \td$ satisfy conditions (a) or (b) in Proposition~\ref{lem6}, then the maximum number of clusters is $\Theta_1=2$; otherwise the maximum number of clusters is $\Theta_3=3$. When the values of horizon lengths and game periods satisfy condition (a) or (b) in Proposition~\ref{lem6}, it follows that $\Theta_{i+1}\geq \Theta_i$ holds for each $i$. This is because under condition (a) or (b), the defender is stronger and thus,
the attacker may not be able to make more clusters when the number of attacks increase.
    
    {\color{black}\begin{remark}
    For the node attack case characterized in Remark~\ref{rem1}, in the upper bounds of the clusters in Proposition~\ref{lem6}, the vector $\Theta$ changes to ${\Theta}_{\vo}\in \mathbb{R}^{|\vo|}$ with elements ${\Theta}_{\vo,i}:=\max_{|\vo^{
    \ar}|=i}\overline{n}_{\vo}((\vo\setminus\vo^{\ar},{\eo}\setminus{\eo_{\vo}^{\ar}}))$, where $\overline{n}_{\vo}((\vo\setminus{\vo^{\ar}},{\eo}\setminus{\eo_{\vo}^{\ar}}))$ is the number of groups of graph $(\vo\setminus\vo^{\ar},{\eo}\setminus{\eo_{\vo}^{\ar}})$ and $\eo_{\vo}^{\ar}$ is the set of all edges adjacent to agents $\vo^{\ar}$. Assuming $\ra/\overline{\beta}^{\ar}_{\vo}<|\vo|$, the terms $\Theta_{\floor{\ra/\bab}}$ and $\Theta_{\floor{\ra/\ba}}$ then change to $\Theta_{\vo,\floor{\ra/\overline{\beta}^{\ar}_{\vo}}}$ and $\Theta_{\vo,\floor{\ra/\beta^{\ar}_{\vo}}}$, respectively.
    \end{remark}}
    
\section{Numerical Example}
    \begin{figure}
        \includegraphics[trim=8 0 0 0,clip, scale=0.6]{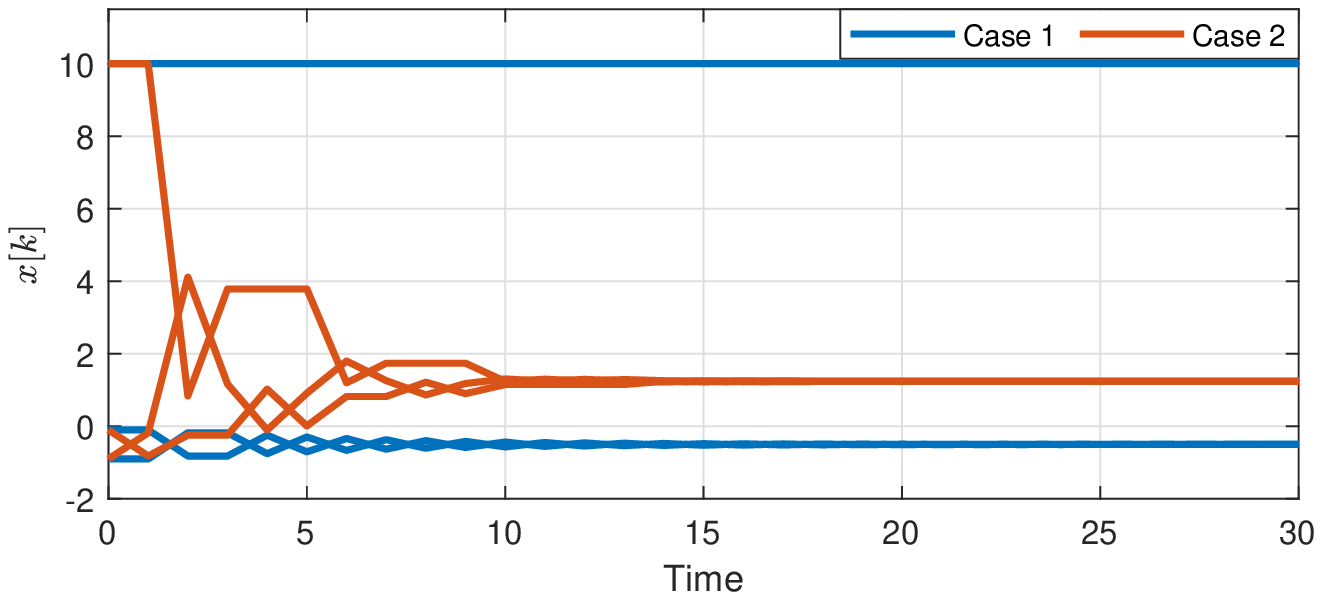}
         \vskip -7pt
        \caption{Agents' states by the defender for Cases~1~and~2}
        \label{fig3}
        \vspace{5pt}
        \includegraphics[trim=8 0 0 0,clip, scale=0.6]{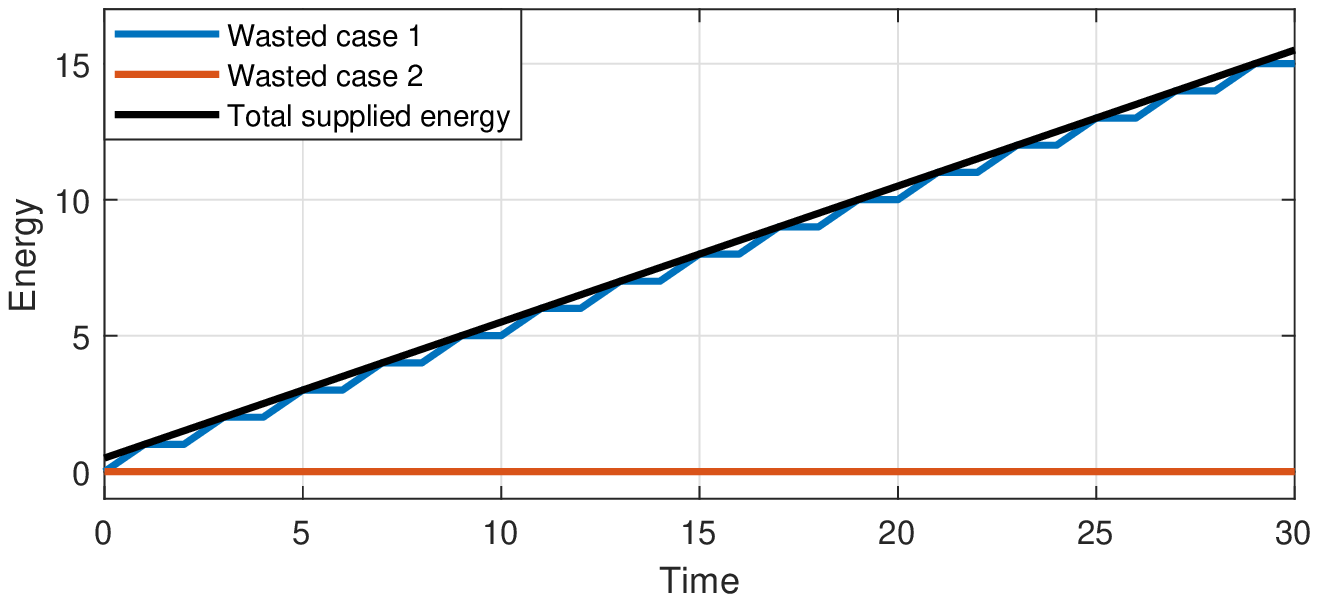}
         \vskip -7pt
        \caption{Wasted energy and total supplied energy, i.e., $\kd+\rd k$, of the defender for Cases~1~and~2}
        \label{fig2}
    \end{figure}

In this section, we show the effects of the horizon lengths and game periods on the consensus speed and the number of clusters by numerical simulation. Specifically, we demonstrate that consensus may still be prevented if the defender's horizon parameters are sufficiently small, even with relatively low energy parameters for the attacker. This is related to the conditions in Propositions~\ref{lem3.4}~and~\ref{lem3}.

Consider a simple path graph consisting of three agents ($|\eo|=2, \lambda=1$) with the following parameters: 
\begin{itemize}
    \item Case 1: $\ha=3$, $\hd=2$, $\ta=1$, $\td=2$,
    \item Case 2: $\ha=\hd=\ta=\td=2$,
\end{itemize}
with $b=0$, $\kd=\rd=0.5$, $\ka=\ra=1.5$, $\bab=2$, $\ba=\bd=1$ in both cases. Notice that since $\ra/\bab<\lambda=1$, the attacker is not able to strongly attack edges at all time to keep the graph disconnected. Thus, in order to prevent consensus, the attacker needs to continuously change its strategies to make the recovery unsuccessful. 

The attacker is stronger than the defender in Case~1, since with $\ha>\hd$ and $\ta<\td$ the attacker can look further forward and update their strategies more often. Consequently, in this case the attacker may avoid the recovery on $\ed$ either by canceling its planned attacks or by changing to strong attacks instead. In Case~2, both players have the exact same horizon parameters, implying that the defender never wastes its energy since the attacker is not able to unilaterally change its strategy.

Fig.~\ref{fig3} shows the evolution of agents' states where agents are being divided into two clusters in Case 1 while in contrast, they converge to the same state in Case~2. This is because in Case~2 the defender wastes all of its its energy by attempting to allocate its resources to the edges that are not attacked normally, as illustrated in Fig.~\ref{fig2}. Note that the values of the horizon lengths and the game periods in Case~2 satisfy the requirements in Proposition~\ref{lem3} to make the necessary condition tighter, i.e., $\ra/\bab\geq\lambda$ instead of of $\ra/\ba\geq\lambda$. On the other hand, the horizon parameters in Case~1 do not satisfy those requirements, making it easier to prevent consensus.

\vspace{-0.1cm}
\section{Conclusion}
We have formulated a two-player game in a cluster formation of resilient multiagent systems. The players consider the impact of their actions on future communication topologies and agent states, and adjust their strategies according to a rolling horizon approach. Conditions for forming clusters among agents have been derived. We have discussed the effect of the horizon parameters on the possible number of clusters and consensus. In general, the attacker needs to have sufficiently long horizon length and short game period to prevent consensus, in addition to having sufficient energy for generating attacks.
\nocite{*}
\bibliographystyle{IEEEtran}
\bibliography{IEEEabrv,arxedit}

\end{document}